\theoremstyle{plain}
\newtheorem{remark}{Remark}
\newtheorem{lemma}{Lemma}
\newtheorem{theorem}{Theorem}
\newtheorem{definition}{Definition}
\newcounter{mnotecount}[section]
\newcommand{\mnotex}[1]%{}
{\protect{\stepcounter{mnotecount}}$^{\mbox{\footnotesize $\bullet$\themnotecount}}$ 
\marginpar{%\color{red}%
\raggedright\tiny\em
$\!\!\!\!\!\!\,\bullet$\themnotecount: #1} }
\begin{document}

\title{\textbf{An ideal conformally
covariant characterization of the Kerr conformal structure}}

\author{ A. Garc\'\i a-Parrado \\
	Departamento de Matem\'aticas, Universidad de C\'ordoba,\\
	Campus de Rabanales, 14071, C\'ordoba, Spain\\
	e-mail: agparrado@uco.es  \\[2ex]
}
\maketitle

\begin{abstract}
We present an \emph{ideal, conformally covariant} characterization of the family of four
dimensional Lorentzian spacetimes
that are conformally related to the Kerr vacuum solution.
\end{abstract}

{Keywords: }
Conformal structure, Kerr solution, Ideal characterization.

\section{Introduction}
Conformal geometry has played a very important role in general relativity (see \cite{JUANBOOK} for a complete review of its applications).
The main reason for this is that a \emph{conformal structure} encompasses naturally a family of spacetimes which share the same
\emph{causal structure}. A conformal structure is a pair formed by a smooth manifold and an equivalence class formed by Lorentzian metrics
that are conformally related to each other (see \cite{CURRYGOVER} for precise definitions). The simplest conformal structure is the set of
Lorentzian metrics conformally related to the flat Minkowski metric (conformally flat spacetimes). As is well-known, a Lorentzian metric is
conformally flat if and only if its Weyl tensor vanishes (in this paper we assume spacetimes of dimension 4).
The Weyl tensor is defined only in terms
of \emph{concomitants} of the metric tensor (a concomitant is a quantity defined in terms of the metric tensor, its inverse and the Levi-Civita connection of the metric tensor). Such a characterization will be termed as an \emph{ideal characterization}. The advantage of ideal characterizations
is that they can be tested \emph{algorithmically} for any given Lorentzian metric. For example, if we wish to know whether a Lorentzian metric is
conformally flat we only need to compute its Weyl tensor and check wether it vanishes or not.

One can seek ideal characterizations for other conformal structures similar to the one des\-cribed above for the
Minkowski's conformal structure. In general relativity the vacuum solutions play a fundamental role and they are mathematically characterized as
\emph{Einstein spaces}. This is a characterization that is ideal in the sense described in the previous paragraph
but unfortunately it is not \emph{conformally covariant}. This means that the set of vacuum solutions in general relativity
does not define a conformal structure. However, if we are given a Lorentzian metric we can ask ourselves whether such a metric
is conformal to a vacuum solution. Ideal characterizations that answer the previous question have been studied in
\cite{KoNewTod85,Listing2001,Edgar2004,GOVER2006450}.

The aim of this paper is to present an ideal characterization of the family of Lorentzian metrics that are conformally
related to the Kerr solution \cite{KERR-METRIC}. This characterization is presented as a set of metric concomitants in
Theorem \ref{theo:kerr-conformal} which is the main result of this work. The relevance of this ideal characterization
is that it does not require the explicit computation of the conformal factor that relates the Lorentzian metric we are testing
with the Kerr metric. Moreover, we prove that our ideal characterization of the Kerr conformal family is also \emph{conformally covariant}.

This paper is organized as follows: in section \ref{sec:conformal-eqs} we summarize the notions of
conformal geometry that are needed. Section \ref{sec:conformal-einstein} presents an ideal, conformally covariant characterization of
spacetimes that are conformal to vacuum solutions of the Einstein field equations (Theorem \ref{theo:ideal-einstein} taken
from \cite{GOVER2006450}). In section \ref{sec:type-D} we review the ideal characterization of the Kerr solution
that was found in \cite{FERSAEZKERR}. Finally section \ref{sec:kerr-conformal} is devoted to our main result given in Theorem
\ref{theo:kerr-conformal}.

All the tensor computations in this paper have been carried out with the system {\em xAct} \cite{XACT}, 
a {\em Wolfram Language} suite for doing tensor analysis (see also  \cite{XPERM}).

\section{Elements of conformal geometry}
\label{sec:conformal-eqs}
Let $(\tilde{\mathcal{M}},{\tilde g}_{ab})$ be a 4-dimensional Lorentzian manifold (referred to as the physical space-time) and
$(\mathcal{M},g_{ab})$ a se\-cond Lorentzian manifold (called the unphysical spacetime) which is {\em conformally related} to the
first in the following fashion (the signature convention for both metrics is $(-,+,+,+)$) 
\begin{equation}
g_{ab} = \Theta^2 \tilde{g}_{ab}.
\label{eq:unphysicaltophysical-downstairs}
\end{equation}
The previous relation can be used to define a conformal map (conformal embedding)
$\Phi:\tilde{\mathcal{M}}\rightarrow\mathcal{M}$ and the
conformal factor $\Theta$ is assumed to be a smooth function which does not vanish
on $\tilde{\mathcal{M}}$ (unless otherwise stated, all structures are assumed to be smooth).
Since $\Phi$ is an embedding, there exists an inverse map
$\Phi^{-1}: \Phi(\tilde{\mathcal{M}})\subset\mathcal{M}\rightarrow \mathcal{M}$.
The conformal embedding just defined is used to construct the \emph{conformal boundary}
of $(\tilde{\mathcal{M}},\tilde{g}_{ab})$ \cite{ZEROREST} and this is the reason for the
adopted terminology. It can also be used to set up an equivalence relation between diffeomorphic
spacetime manifolds under a conformal map. The resulting equivalence classes are called
\emph{conformal structures}. In this way, if \eqref{eq:unphysicaltophysical-downstairs} holds
then $g_{ab}$ and $\tilde{g}_{ab}$ define the same conformal structure.

We shall use small Latin letters to denote abstract indices
of tensors in $\mathcal{M}$ and $\tilde{\mathcal{M}}$. 
Indices are always raised and lowered with respect to the unphysical metric $g_{ab}$ 
with the exception of $\tilde{g}^{ab}$, where we follow
the traditional convention that it represents the inverse 
of $\tilde{g}_{ab}$
\begin{equation}
\tilde{g}^{ac}\tilde{g}_{cb}=\delta_b{}^a.
\end{equation}
From \eqref{eq:unphysicaltophysical-downstairs}, the explicit relation between
the physical and the unphysical contravariant metric tensors becomes
\begin{equation}
g^{ab} = \frac{\tilde{g}^{ab}}{\Theta^2}.
\label{eq:unphysicaltophysical-upstairs}
\end{equation}

Each of the metric tensors $g_{ab}$, $\tilde{g}_{ab}$ has its own volume element,
denoted respectively by $\eta_{abcd}$ and $\tilde{\eta}_{abcd}$.
Using again (\ref{eq:unphysicaltophysical-downstairs}) we deduce the relation
\begin{equation}
\tilde{\eta}_{abcd} = \frac{\eta_{abcd}}{\Theta^4}.
\label{eq:etaunphystoetaphys}
\end{equation}
Also each metric tensor has its own Levi-Civita connection denoted respectively 
by $\nabla_a$, $\tilde{\nabla}_a$ which are used to define the connection coefficients and the curvature 
tensors in the standard fashion. Our conventions for the (unphysical) Riemann, Ricci and Weyl tensors 
are
\begin{equation}
\nabla_{a}\nabla_{b}\omega_{c} -  \nabla_{b}\nabla_{a}\omega_{c} =  R_{abc}{}^{d} \omega_{d}\;,
\label{eq:define-riemann}
\end{equation}
\begin{equation}
 R_{ac}\equiv R_{abc}{}^b\;,
\label{eq:define-ricci}
\end{equation}
\begin{equation}
C_{abc}{}^{d}\equiv R_{abc}{}^{d} - 2 L^{d}{}_{[b}g_{a]c} - 2\delta^{d}{}_{[b}L_{a]c}\;,
\label{eq:define-weyl}
\end{equation}
where the unphysical Schouten tensor is defined by
\begin{equation}
 L_{ab} \equiv \tfrac{1}{2} (R_{ab} -  \tfrac{1}{6} R g_{ab}).
\label{eq:define-schouten}
\end{equation}
Tensors defined in terms of the physical metric $\tilde{g}_{ab}$ will be denoted with a tilde over
the symbol employed for an unphysical spacetime tensor.
The difference between the Levi-Civita connection
coefficients of the physical and the unphysical metrics
is the tensor $\Gamma[\nabla ,\tilde{\nabla}]^{e}{}_{ac}$ given by
\begin{equation}
\Gamma[\nabla ,\tilde{\nabla}]^{e}{}_{ac} = \delta_{c}{}^{e} \
\Upsilon_{a} -  g_{ca} \Upsilon^{e} + \delta_{a}{}^{e} \
\Upsilon_{c},\quad
\Upsilon_a\equiv\frac{\nabla_a\Theta}{\Theta}.
\label{eq:define-upsilon}
\end{equation}
Using the previous equation, standard computations enable us to find the relations between the curvature
tensors of $\nabla$ and $\tilde{\nabla}$.
In this work, the relation between the unphysical
Weyl tensor and the physical one will be specially important
\begin{equation}
C_{abc}{}^d=\tilde{C}_{abc}{}^d,\quad
\label{eq:weylunphystoweylphys}
\end{equation}
Here, $\tilde{C}_{abc}{}^{d}$ is the physical Weyl tensor as defined by \eqref{eq:define-weyl}
in terms of $\tilde{L}_{ab}$, $\tilde{g}_{de}$ and $\tilde{R}_{abc}{}^{d}$.
The star $*$ is used to denote both the Hodge dual (with respect to the physical or unphysical metric)
and the complex conjugation and we leave to the context the distinction between these.

The notions of \emph{metric concomitant} and \emph{conformally covariant} metric concomitant
are of fundamental relevance for this work.

\begin{definition}[Metric concomitant]
A \emph{metric concomitant} is a quantity defined in terms of the metric tensor,
its inverse and the Levi-Civita connection of the metric tensor.
\end{definition}

\begin{definition}[Conformally covariant metric concomitant]
 A \emph{metric concomitant} $T(g,\nabla)$ is said to be
\emph{conformally covariant} if it fulfills the following
relation under the transformation $g_{ab} = \Theta^2 \tilde{g}_{ab}$
\begin{equation}
{\tilde T}({\tilde g}_{ab}, \tilde\nabla)= \Theta^p T(g_{ab},\nabla),\quad
 p\in\mathbb{R},
 \label{eq:conformal-covariance}
\end{equation}
where the real number $p$ is called the \emph{conformal weight} of $T$
and ${\tilde g}_{ab}$, $g_{ab}$ are related by \eqref{eq:unphysicaltophysical-downstairs}
\end{definition}
For example the
Weyl tensor with the index configuration
as described by eq. \eqref{eq:weylunphys-cov-to-weylphys}
is conformally covariant with weight zero
(aka conformally invariant).
Any metric dependent quantity satisfying \eqref{eq:conformal-covariance}
can be regarded as a section of a suitable vector bundle of
\emph{conformal densities} (see \cite{CURRYGOVER}
for further details).

\section{An ideal conformally invariant characterization of Einstein spacetimes}
\label{sec:conformal-einstein}
Recall that a physical spacetime is Einstein iff
\begin{equation}
\tilde{R}_{ab}=\lambda \tilde{g}_{ab},\quad \lambda\in\mathbb{R}.
\label{eq:vacuum-lambda}
\end{equation}
A basic consequence of the Einstein field equations is that a vacuum solution is represented by an Einstein spacetime. The question now is whether for a given unphysical spacetime
$(\mathcal{M},g_{ab})$ we can find a \emph{local},
\emph{ideal} and \emph{conformally covariant} set of conditions that guarantee that $g_{ab}$ is conformal to an Einstein physical spacetime.
By local we mean that the conditions should hold in the neighbourhood of a point of $\mathcal M$ and by ideal conformally covariant
we mean that the conditions only involve the metric $g_{ab}$, its \emph{covariant concomitants} (tensors constructed out of the metric
and its Levi-Civita compatible covariant derivative) and also each of the conditions should
fulfill \eqref{eq:conformal-covariance} for some $p\in\mathbb{R}$.
This relevant question has been studied in \cite{KoNewTod85,Listing2001,Edgar2004,GOVER2006450}
The following Theorem is one of the results presented in \cite{GOVER2006450}
although we choose to present it in a form adapted to our purposes.

\begin{theorem}[Gover \& Nurowski 2006]
 Let $(\mathcal{M},g_{ab})$ be a 4-dimensional spacetime and define the following concomitants of $g_{ab}$.
 Define the unphysical metric concomitants
\begin{equation*}
\Lambda^d\equiv\left(\frac{8}{C\cdot C}\right)C^{damp}\nabla_{[m}L_{p]a},\quad
C\cdot C\equiv C_{abcd}C^{abcd}.
\end{equation*}
Then, assuming that $C\cdot C\neq 0$, $(\mathcal{M},g_{ab})$ is locally conformal to an Einstein spacetime if and only if the following
covariant concomitant of $g_{ab}$ vanishes
\begin{equation}
 E_{ab}(g_{ab},\nabla)\equiv\mbox{\rm TF}_{g}\big[
	L_{ab}+
	\nabla_{a}\Lambda_{b}
	+ \Lambda_a\Lambda_b\big],
\label{eq:ideal-einstein}
\end{equation}
where ${\rm TF}_g$ stands for ``trace-free part'' with respect to $g_{ab}$.
Moreover $E_{ab}$ is \emph{conformally invariant} according to
\begin{equation*}
 \tilde{E}_{ab}(\tilde{g}_{ab},\tilde\nabla)= E_{ab}(g_{ab},\nabla).
\end{equation*}
\label{theo:ideal-einstein}
\end{theorem}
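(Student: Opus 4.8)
The plan is to reduce the statement to the conformal transformation law of the Schouten tensor and to recognize the bracket in \eqref{eq:ideal-einstein} as that law written with the candidate field $\Lambda_a$ in place of the genuine gradient $\Upsilon_a$. Recall that a physical metric is Einstein precisely when the trace-free part of its Schouten tensor $\tilde L_{ab}$ vanishes; the constancy of $\lambda$ in \eqref{eq:vacuum-lambda} then follows from the contracted Bianchi identity by a Schur-type argument. Under \eqref{eq:unphysicaltophysical-downstairs}, with $\Upsilon_a$ as in \eqref{eq:define-upsilon}, the Schouten tensor obeys
\[
\tilde L_{ab}=L_{ab}+\nabla_a\Upsilon_b+\Upsilon_a\Upsilon_b-\tfrac12\Upsilon_c\Upsilon^c\,g_{ab},
\]
where all derivatives and contractions are unphysical. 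Since the last term is pure trace with respect to $g_{ab}$, applying $\mathrm{TF}_g$ gives $\mathrm{TF}_g[\tilde L_{ab}]=\mathrm{TF}_g[L_{ab}+\nabla_a\Upsilon_b+\Upsilon_a\Upsilon_b]$, and $\mathrm{TF}_g$ and $\mathrm{TF}_{\tilde g}$ act identically on any fixed $(0,2)$ tensor because $\tilde g^{cd}\tilde g_{ab}=g^{cd}g_{ab}$. Thus everything hinges on showing that the concomitant $\Lambda_a$ coincides with an admissible $\Upsilon_a$ whenever $g_{ab}$ is conformal to Einstein. The ingredients I would prepare are the contracted second Bianchi identity, which in four dimensions reads $\nabla^d C_{abcd}=2\nabla_{[a}L_{b]c}$, the conformal transformation of this Weyl divergence, and the dimensionally dependent identity for the four-dimensional Weyl tensor that reduces the triple contraction $C^{damp}C_{mpae}$ to a multiple of $(C\cdot C)\,\delta^{d}{}_{e}$.

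For necessity, suppose $g_{ab}$ is locally conformal to Einstein and pass to the physical (Einstein) gauge. There $\tilde L_{ab}$ is pure trace with constant trace, so the physical Cotton tensor $\tilde\nabla_{[a}\tilde L_{b]c}$, and hence $\tilde\nabla^d\tilde C_{abcd}$, vanishes. Transforming the Weyl divergence back to the unphysical gauge, using \eqref{eq:weylunphystoweylphys} and the connection difference \eqref{eq:define-upsilon}, leaves only the inhomogeneous $\Upsilon$-term, so that $\nabla_{[m}L_{p]a}$ becomes proportional to $\Upsilon^e C_{mpae}$. Substituting into the definition of $\Lambda^d$ and collapsing the resulting double Weyl contraction with the four-dimensional identity above, the factor $C\cdot C$ cancels against the prefactor $8/(C\cdot C)$ and one is left with $\Lambda^d=\Upsilon^d$. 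Feeding this into the first paragraph gives $E_{ab}=\mathrm{TF}_g[\tilde L_{ab}]=0$. I expect this to be the main obstacle: arranging that every numerical factor in the chain Bianchi $\to$ Weyl-divergence transformation $\to$ triple contraction conspires so that the normalization $8/(C\cdot C)$ reproduces $\Upsilon^d$ \emph{exactly}. Here the hypothesis $C\cdot C\neq0$ is precisely what makes $\Lambda^d$ well defined and the contraction invertible.

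For sufficiency, assume $E_{ab}=0$. Then $L_{ab}+\nabla_a\Lambda_b+\Lambda_a\Lambda_b=\phi\,g_{ab}$ for some scalar $\phi$. Since $L_{ab}$, $\Lambda_a\Lambda_b$ and $\phi g_{ab}$ are all symmetric, so is $\nabla_a\Lambda_b$; hence $\nabla_{[a}\Lambda_{b]}=0$ and $\Lambda_a$ is closed. By the Poincar\'e lemma there is locally a positive function $\Theta$ with $\Lambda_a=\nabla_a\ln\Theta$, i.e.\ $\Lambda_a=\Upsilon_a$ for the factor $\Theta$ of \eqref{eq:unphysicaltophysical-downstairs}. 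Setting $\tilde g_{ab}=\Theta^{-2}g_{ab}$ and invoking the Schouten law of the first paragraph, $\mathrm{TF}_g[\tilde L_{ab}]=E_{ab}=0$, so $\tilde L_{ab}$ is pure trace and $\tilde g_{ab}$ is Einstein, with $\lambda$ constant again by Schur. Thus $g_{ab}$ is locally conformal to an Einstein spacetime.

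Finally, for the conformal invariance of $E_{ab}$ I would first establish the transformation law $\tilde\Lambda_a=\Lambda_a-\Upsilon_a$, which follows from the same Weyl-divergence transformation and four-dimensional contraction used in the necessity step; the Einstein-gauge computation, where $\tilde\Lambda_a=0$ and $\Lambda_a=\Upsilon_a$, is the special case that fixes the sign. Substituting $\tilde\Lambda_a=\Lambda_a-\Upsilon_a$ together with the Schouten law and the connection difference \eqref{eq:define-upsilon} into $\tilde E_{ab}=\mathrm{TF}_{\tilde g}[\tilde L_{ab}+\tilde\nabla_a\tilde\Lambda_b+\tilde\Lambda_a\tilde\Lambda_b]$ and expanding, all $\Upsilon$-dependent terms cancel in pairs, yielding $\tilde E_{ab}=\mathrm{TF}_g[L_{ab}+\nabla_a\Lambda_b+\Lambda_a\Lambda_b]=E_{ab}$. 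The bookkeeping here is essentially the computation already carried out for necessity, so the transformation law for $\Lambda_a$ is the single technical lemma on which both the equivalence and the invariance rest.
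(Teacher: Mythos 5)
Your proposal is correct and follows essentially the same route as the paper: necessity via showing $\Lambda_a=\Upsilon_a$ using the dimension-four Weyl contraction identity, sufficiency via the closedness of $\Lambda_a$ and the Poincar\'e lemma, and invariance via $\tilde\Lambda_a=\Lambda_a-\Upsilon_a$. The only cosmetic difference is that you obtain the key relation $\Upsilon^bC_{abmp}+2\nabla_{[m}L_{p]a}=0$ from the contracted Bianchi identity and the conformal transformation of the Weyl divergence, whereas the paper gets the same identity as the integrability condition of the Schouten transformation law.
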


\begin{proof}
The relation between the
unphysical and physical Schouten tensors can be always written in the form (see e.g. \cite{JUANBOOK})
\begin{equation}
L_{ab} = \tilde{L}_{ab} -  \Upsilon_{a} \Upsilon_{b} + \frac{1}{2} g_{ab} \Upsilon_{c} \Upsilon^{c} -  \nabla_{b}\Upsilon_{a}.
\label{eq:nabla-upsilon}
\end{equation}
Computing the unphysical covariant derivative of the previous equation and using \eqref{eq:define-riemann} with $\omega_d$ replaced by
$\Upsilon_d$ we get the following integrability condition
\begin{equation}
2\tilde{L}_{b[p} g_{m]a} \Upsilon^{b}
+2\tilde{L}_{a[m}\Upsilon_{p]} +
\Upsilon^{b} C_{abmp} + 2\nabla_{[p}\tilde{L}_{m]a} +
2\nabla_{[m}L_{p]a} = 0.
\label{eq:integrability-schouten}
\end{equation}
Assume now that we are under the hypotheses of
Theorem \ref{theo:ideal-einstein} and the physical spacetime is Einstein. Thus, according to
\eqref{eq:vacuum-lambda} we have $\tilde{L}_{ab}=\lambda\tilde{g}_{ab}/6$. Replacing this into
\eqref{eq:integrability-schouten} we obtain, after some manipulations involving
\eqref{eq:unphysicaltophysical-downstairs} and the definition of $\Upsilon_a$
\begin{equation}
\Upsilon^{b} C_{abmp} + 2\nabla_{[m}L_{p]a} = 0.
\end{equation}
If we multiply both sides of this equation by $C^{damp}$ and use the identity valid only
in dimension four,
$C_{abmp}C^{dbmp}=\tfrac{1}{4}\delta_a{}^dC_{cbmp}C^{cbmp}$
(see e.g. \cite{EDGARHOGLUNDDDI}),
it yields
\begin{equation}
 \Upsilon^d=\left(\frac{8}{C\cdot C}\right)C^{damp}\nabla_{[m}L_{p]a},
\label{eq:ideal-upsilon}
\end{equation}
where we assumed $C\cdot C\neq 0$. Replacing $\Upsilon^d$
back in \eqref{eq:nabla-upsilon}
we get $E_{ab} =0$. Conversely, assume that $E_{ab} =0$. Then from \eqref{eq:ideal-einstein}
we get
$$
\nabla_{[a}\Lambda_{b]} = 0.
$$
This means that locally $\Lambda_a=\nabla_a\log\psi$ for some function $\psi>0$.
Define now the physical metric ${\tilde g}_{ab}=\tfrac{g_{ab}}{\psi^2}$.
Then eq. \eqref{eq:nabla-upsilon} becomes
\begin{equation}
L_{ab} = \tilde{L}_{ab} -  \Lambda_{a} \Lambda_{b}
+ \frac{1}{2} g_{ab} \Lambda_{c} \Lambda^{c} -  \nabla_{b}\Lambda_{a}
\end{equation}
Replacing this value of $L_{ab}$ into \eqref{eq:ideal-einstein} we
conclude that ${\rm TF}_g(\tilde{L}_{ab})=0$ so $\tilde{L}_{ab}$ is proportional
to ${\tilde g}_{ab}$ and thus $\tilde{g}_{ab}$ is necessarily Einstein.
Finally, to prove the conformal invariance of $E_{ab}$ let us
assume the relation \eqref{eq:unphysicaltophysical-downstairs}
but this time we do not impose that the physical $\tilde{g}_{ab}$
is Einstein. Then we have the relations
(see eqs. \eqref{eq:Lambda-conformal}, \eqref{eq:nabla-upsilon} and \eqref{eq:define-upsilon})
\begin{eqnarray}
&&
  \tilde{g}_{de}\tilde{\Lambda}^{e} = \Lambda_{d} - \Upsilon_{d},\\
&&
  \tilde{L}_{ab} = L_{ab} + \Upsilon_{a} \Upsilon_{b} -  \tfrac{1}{2} \
  g_{ab} \Upsilon_{c} \Upsilon^{c} + \nabla_{b}\Upsilon_{a},\\
&&
  \tilde{\nabla}_{b}(\tilde{g}_{ae}\tilde\Lambda^{e}) = \Lambda_{b} \Upsilon_{a} + \
  \Lambda_{a} \Upsilon_{b} - 2 \Upsilon_{a} \Upsilon_{b} +
  g_{ab} (- \Lambda^{c} \Upsilon_{c} + \Upsilon_{c} \Upsilon^{c}) +
  \nabla_{b}\Lambda_{a} -  \nabla_{b}\Upsilon_{a}.
 \end{eqnarray}
Putting all together we get
\begin{equation}
\mbox{\rm TF}_{\tilde{g}}\big[
	\tilde{L}_{ab}+
	\tilde{\nabla}_{a}(\tilde{g}_{be}\Lambda^{e})
	+ \tilde{\Lambda}_a\tilde{\Lambda}_b\big] =
\mbox{\rm TF}_{g}\big[
	L_{ab}+
	\nabla_{a}\Lambda_{b}
	+ \Lambda_a\Lambda_b\big],
\end{equation}
which is the sought conformal invariance.
\end{proof}

Eq. \eqref{eq:ideal-upsilon} is an useful relation that enables us to write the differential of $\Theta$ in terms of concomitants
of the metric $g_{ab}$ whenever $\Theta$ defines a conformal relation between an unphysical metric and
a physical Einstein space in dimension 4. See \cite{Edgar2004} to find out how the previous computation can be generalized to any dimension.

If the unphysical spacetime does not fulfill the hypotheses of Theorem \ref{theo:ideal-einstein} then \eqref{eq:ideal-upsilon}
does not necessarily hold. However, we can still define a concomitant of the unphysical metric $g_{ab}$ using the right hand side
of this equation
\begin{equation}
\Lambda^d\equiv\left(\frac{8}{C\cdot C}\right)C^{damp}\nabla_{[m}L_{p]a}.
\label{eq:ideal-lambda}
\end{equation}
The concomitant $\Lambda^d$ will be needed in our main result, Theorem \ref{theo:kerr-conformal}. Clearly
if $\tilde{g}_{ab}$ is Einstein then $\Lambda_a=\Upsilon_a$.

\section{An ideal characterization of the physical Kerr solution}
\label{sec:type-D}
A set of \emph{local} characterizations of the Kerr solution
were presented in \cite{MARS-KERR-UNIQUENESS, MARS-KERR}. However,
these results are not ideal in the sense that they necessitate the
existence of a Killing vector field (in \cite{MARS-KERR-UNIQUENESS} also
the asymptotic flatness is required). In \cite{FERSAEZKERR} a local ideal characterization of
the Kerr solution was produced. This characterization of the Kerr solution
will be taken as our starting point for the ideal characterization of
the Kerr conformal family, so it will be next reviewed.
Let us introduce the required geometric elements:
from the physical metric $\tilde{g}_{ab}$ and
its inverse $\tilde{g}^{ab}$, we define its volume element $\tilde{\eta}_{abcd}$,
the physical Weyl tensor $\tilde{W}_{abcd}\equiv\tilde{g}_{ah}\tilde{C}_{bcd}{}^h$,
the physical Weyl tensor right dual $\tilde{W}^*_{abcd}$ with respect to the physical metric
and the physical self-dual Weyl tensor
\begin{equation}
\tilde{\mathcal{W}}_{abcd}\equiv\frac{1}{2} (\tilde{W}_{abcd} - {\rm i} \tilde{W}^{*}{}_{abcd}),\quad
\tilde{W}^{*}{}_{abcd}\equiv\frac{1}{2}\tilde{\eta}_{cdpq}\tilde{g}^{pr}\tilde{g}^{qs}\tilde{W}_{abrs}.
\label{eq:define-weyl-self-dual}
\end{equation}
Next we introduce the following physical Weyl scalars
\begin{eqnarray}
&&\tilde{\mathit{a}} \equiv \tilde{g}^{ac}\tilde{g}^{be}\tilde{g}^{fq}\tilde{g}^{pd} \tilde{\mathcal{W}}_{abpf} \tilde{\mathcal{W}}_{cedq}\;,\\
&&\tilde{\mathit{b}} \equiv \tilde{g}^{ai}\tilde{g}^{bj}\tilde{g}^{ce}\tilde{g}^{dg}\tilde{g}^{pf}\tilde{g}^{qh}\tilde{\mathcal{W}}_{abcd} \tilde{\mathcal{W}}_{egpq} 
\tilde{\mathcal{W}}_{fhij}\;,\\
&&\tilde{\mathit{w}}\equiv-\frac{\tilde{\mathit{b}}}{2\tilde{\mathit{a}}}\;,\quad \tilde{a}\neq 0
\end{eqnarray}
and the tensors
\begin{equation}
\tilde{G}_{abmh} \equiv \tilde{g}_{am} \tilde{g}_{bh} -  \tilde{g}_{ah} \tilde{g}_{bm}\;,\quad
\tilde{\mathcal{G}}_{abcd} \equiv \tfrac{1}{2} (-{\rm i} \tilde{\eta}_{abcd} + \tilde{G}_{abcd}).
\end{equation}

The Kerr solution belongs to the class of vacuum \emph{type D} solutions.  An
ideal characterization of the latter family of algebraically special solutions was
presented in  \cite{FERRANDOCOVARIANTPETROVTYPE,FERRSAEZTYPED}. We reproduce here this
ideal characterization, adopting the formulation of \cite{AGPTYPEDDATA}.
\begin{theorem}
The physical spacetime $(\tilde{\mathcal{M}},\tilde{g}_{ab})$ is of ``genuine'' Petrov type D (Petrov type D, but not any of its specializations) if and only if 
\begin{equation}
\tilde{a}\neq 0\;,\quad \tilde{\mathcal{D}}_{abcd}=0\;,
\label{eq:type-D}
\end{equation}
where
\begin{equation}
\tilde{\mathcal{D}}_{abhe}\equiv \tilde{\mathcal{W}}_{abcd}\tilde{g}^{cp} 
\tilde{g}^{dq}\tilde{\mathcal{W}}_{pqhe} -  
\frac{\tilde{\mathit{a}}}{6}\tilde{\mathcal{G}}_{abhe} - 
\frac{\tilde{\mathit{b}}}{\tilde{\mathit{a}}}\tilde{\mathcal{W}}_{abhe}.
\end{equation}
\label{theo:type-D}
\end{theorem}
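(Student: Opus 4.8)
The plan is to prove the equivalence in Theorem \ref{theo:type-D} by exploiting the algebraic structure of the self-dual Weyl tensor $\tilde{\mathcal{W}}_{abcd}$ in four dimensions. The key observation is that the self-dual Weyl tensor can be regarded as a symmetric, trace-free endomorphism of the three-dimensional complex vector space of self-dual bivectors, equipped with the metric $\tilde{\mathcal{G}}_{abcd}$. Under this identification the contraction $\tilde{\mathcal{W}}_{abcd}\tilde{g}^{cp}\tilde{g}^{dq}\tilde{\mathcal{W}}_{pqhe}$ is just the square of this endomorphism, and the Cayley--Hamilton theorem for a $3\times 3$ trace-free matrix states that its square is a fixed linear combination of the identity and the endomorphism itself, with coefficients built from the two nonvanishing invariants. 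These invariants are precisely the scalars $\tilde{\mathit{a}}$ and $\tilde{\mathit{b}}$, since for a trace-free endomorphism the relevant traces are $\mathrm{tr}(\tilde{\mathcal{W}}^2)=\tilde{\mathit{a}}$ and $\mathrm{tr}(\tilde{\mathcal{W}}^3)=\tilde{\mathit{b}}$ up to normalization.

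First I would set up the endomorphism picture carefully, writing $\tilde{\mathcal{W}}$ as a linear map on self-dual bivectors and recording that $\tilde{\mathcal{G}}$ plays the role of the identity in that space. Next I would invoke the Cayley--Hamilton relation for the characteristic polynomial $\mu^3 - \tfrac{1}{2}\mathrm{tr}(\tilde{\mathcal{W}}^2)\,\mu - \tfrac{1}{3}\mathrm{tr}(\tilde{\mathcal{W}}^3) = 0$ of a trace-free $3\times 3$ matrix, which translates directly into the tensorial identity $\tilde{\mathcal{W}}^2 = \tfrac{\tilde{\mathit{a}}}{6}\tilde{\mathcal{G}} + \tfrac{\tilde{\mathit{b}}}{\tilde{\mathit{a}}}\tilde{\mathcal{W}}$ once the scalar invariants are matched, so that $\tilde{\mathcal{D}}_{abcd}=0$ is nothing but the Cayley--Hamilton identity written as a tensor. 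The content of the theorem is then to show that this identity holds if and only if $\tilde{\mathcal{W}}$ has a doubly-degenerate eigenvalue structure with all three eigenvalues distinct from the generic (type I) case collapsing to exactly two eigenvalues of the correct multiplicity pattern, which is the algebraic signature of Petrov type D.

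The translation between the vanishing of $\tilde{\mathcal{D}}$ and the eigenvalue pattern proceeds by diagonalizing $\tilde{\mathcal{W}}$ (possible in type I and type D, where it is diagonalizable) with eigenvalues $\mu_1,\mu_2,\mu_3$ summing to zero. The equation $\tilde{\mathcal{D}}=0$ forces each eigenvalue to satisfy $\mu_i^2 = \tfrac{\tilde{\mathit{a}}}{6} + \tfrac{\tilde{\mathit{b}}}{\tilde{\mathit{a}}}\mu_i$, i.e.\ all three eigenvalues are roots of a single quadratic, so at most two distinct values occur; combined with tracelessness and $\tilde{\mathit{a}}\neq 0$ (which rules out the type O and type N/III degenerations where $\tilde{\mathcal{W}}$ is nilpotent or vanishes) this pins down precisely the type D pattern $\{\mu,\mu,-2\mu\}$. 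I would express $\tilde{\mathit{a}}$ and $\tilde{\mathit{b}}$ in terms of the eigenvalues and verify the matching of coefficients, in particular that $\tilde{\mathit{w}}=-\tilde{\mathit{b}}/(2\tilde{\mathit{a}})$ recovers the repeated eigenvalue.

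The hard part will be establishing that the tensorial identity genuinely captures \emph{genuine} type D and excludes every specialization: I must confirm that $\tilde{\mathit{a}}\neq 0$ eliminates types N, III, and O (where the invariants vanish), and separately argue that when $\tilde{\mathcal{D}}=0$ holds the endomorphism cannot be non-diagonalizable (type II or its subcases), since a nilpotent Jordan block would violate the quadratic eigenvalue equation derived above. Checking that the normalization constants in $\tilde{\mathcal{D}}$ exactly reproduce the Cayley--Hamilton coefficients — so that no spurious solutions are admitted and no genuine type D metric is excluded — is the delicate bookkeeping step, and it is where the four-dimensional identities for the volume element and the precise definitions of $\tilde{\mathit{a}}$, $\tilde{\mathit{b}}$ must be used with care. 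Here I would lean on the covariant Petrov classification results of \cite{FERRANDOCOVARIANTPETROVTYPE,FERRSAEZTYPED}, which already encode this correspondence, and simply verify that the formulation adopted from \cite{AGPTYPEDDATA} is an equivalent repackaging.
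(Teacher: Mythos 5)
The paper offers no proof of Theorem \ref{theo:type-D}: it is quoted from \cite{FERRANDOCOVARIANTPETROVTYPE,FERRSAEZTYPED} in the formulation of \cite{AGPTYPEDDATA}, so there is no in-paper argument to compare yours against. Your overall strategy --- regard $\tilde{\mathcal{W}}$ as a trace-free symmetric endomorphism of the three-dimensional complex space of self-dual bivectors, with $\tilde{\mathcal{G}}$ playing the role of the identity, and read $\tilde{\mathcal{D}}_{abcd}=0$ as the statement that a certain \emph{quadratic} polynomial annihilates this endomorphism --- is precisely the route of the cited references, and your endgame (a quadratic annihilating polynomial limits the spectrum to two values and excludes nontrivial Jordan blocks; tracelessness and $\tilde{a}\neq 0$ then force the pattern $\{\mu,\mu,-2\mu\}$ with $\mu\neq 0$, which is genuine type D, while types N, III and O are killed by $\tilde{a}\neq 0$) is the correct way to close the argument.

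One statement in the middle is genuinely false and must be removed: the Cayley--Hamilton theorem for a trace-free $3\times 3$ matrix does \emph{not} say that its square is a linear combination of the identity and the matrix; it is the cubic identity $\tilde{\mathcal{W}}^{3}=\tfrac12\mathrm{tr}(\tilde{\mathcal{W}}^{2})\,\tilde{\mathcal{W}}+\tfrac13\mathrm{tr}(\tilde{\mathcal{W}}^{3})\,\mathrm{Id}$, and it holds for the Weyl endomorphism of \emph{every} Petrov type. If $\tilde{\mathcal{D}}_{abcd}=0$ ``translated directly'' from Cayley--Hamilton, as you write, it would hold identically and the theorem would be vacuous. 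The correct statement --- which your subsequent eigenvalue analysis tacitly uses --- is that $\tilde{\mathcal{D}}_{abcd}=0$ asserts that the degree of the \emph{minimal} polynomial drops from three to two; that is the nongeneric condition. Also watch the bivector normalizations: with the trace convention $\mathrm{tr}(\hat A)=\tfrac12 A_{ab}{}^{ab}$ and composition $\hat A\hat B\leftrightarrow\tfrac12 A_{ab}{}^{pq}B_{pqcd}$ one finds $\tilde{a}=4\,\mathrm{tr}(\hat{\mathcal{W}}^{2})$ and $\tilde{b}=8\,\mathrm{tr}(\hat{\mathcal{W}}^{3})$, so the eigenvalue equation implied by $\tilde{\mathcal{D}}_{abcd}=0$ is $\mu^{2}=\tfrac{\tilde{a}}{12}+\tfrac{\tilde{b}}{2\tilde{a}}\mu$, whose two roots are $\tilde{w}$ and $-2\tilde{w}$, rather than $\mu^{2}=\tfrac{\tilde{a}}{6}+\tfrac{\tilde{b}}{\tilde{a}}\mu$ as written. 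These are exactly the factors of two you flagged as delicate; they do matter, but once the Cayley--Hamilton claim is replaced by the minimal-polynomial statement and the coefficients are matched, the proof goes through and agrees with the cited literature.
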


The following result, proven in \cite{FERSAEZKERR}, is the ideal characterization of
the \emph{physical Kerr family} we are going to use.
The formulation presented here is taken from
\cite{AGPTYPEDDATA,AGPTYPEDDATA-ERRATUM}.
\begin{theorem}[Ideal characterization of the Kerr solution]
Under the conditions of Theorem \ref{theo:type-D}
a vacuum (Ricci flat $\tilde{R}_{ab}=0$) space-time $(\tilde{\mathcal{M}},\tilde{g}_{ab})$ is locally isometric to the Kerr
solution with non-vanishing mass (non-trivial Kerr solution) if and only if the following additional conditions hold
\begin{eqnarray}
&&\tilde{\Xi}_{a[b}\tilde{\Xi}^*_{c]d}=0\;,
\label{eq:killing-property-kerr}\\
&&\mbox{\em Im}(\tilde{Z}^{3}(\tilde{w}^*)^8)=0\;,
\label{eq:nutzero}\\
&&
\left\{
\begin{array}{l}
	\frac{\mbox{\em Re}(\tilde{Z}^{3}(\tilde{w}^*)^8)}{\big(18 \mbox{\em Re}\big(\tilde{w}^3\tilde{Z}^*\big)-|\tilde{Z}|^2\big)^3}<0,\;
	(\mbox{if}\ 18 \mbox{\em Re}\big(\tilde{w}^3\tilde{Z}^*\big)-|\tilde{Z}|^2\neq 0),\;\\
	\\
	\mbox{\em Re}(\tilde{Z}^{3}(\tilde{w}^*)^8)=0,\;
	(\mbox{if}\ 18 \mbox{\em Re}\big(\tilde{w}^3\tilde{Z}^*\big)-|\tilde{Z}|^2= 0)\;,
\end{array}
\right.
\label{eq:epsilong0}
\end{eqnarray}
where $\tilde{\Xi}_{ab}$, $\tilde{Z}$ are the concomitants of $\tilde{g}_{ab}$  defined by
\begin{eqnarray}
	&&\tilde{\Xi}_{ac} \equiv (
		\tilde{g}^{bp} \tilde{g}^{dq} \tilde{\mathcal{W}}_{apcq} -
		\tilde{\mathit{w}}\tilde{\mathcal{G}}_{apcq}
		\tilde{g}^{bp}
		\tilde{g}^{dq}
	)
	\tilde{\nabla}_{b}\tilde{\mathit{w}}\tilde{\nabla}_{d}\tilde{\mathit{w}},\label{eq:define-tildexi}\\
	&& \tilde{Z}\equiv \tilde{g}^{ab}\tilde{\nabla}_a \tilde{w}\tilde{\nabla}_b\tilde{w}.\label{eq:define-tildez}
\end{eqnarray}

\label{theo:kerr-local}
\end{theorem}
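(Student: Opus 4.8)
The plan is to prove the two implications separately, handling necessity by explicit evaluation on the Kerr(--NUT) family and sufficiency by first extracting a Killing vector from the intrinsic data and then identifying the metric within the vacuum type D class. The starting observation is that, under the hypotheses of Theorem~\ref{theo:type-D}, the self-dual Weyl tensor $\tilde{\mathcal{W}}_{abcd}$ has a single independent eigenvalue proportional to the lone Weyl scalar $\Psi_2$; since $\tilde{a}\propto\Psi_2^{2}$ and $\tilde{b}\propto\Psi_2^{3}$, the concomitant $\tilde{w}=-\tilde{b}/(2\tilde{a})$ is an \emph{intrinsic complex scalar field} proportional to $\Psi_2$. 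Its gradient $\tilde\nabla_a\tilde{w}$ singles out a preferred (generically non-null) direction, and $\tilde{Z}$ in \eqref{eq:define-tildez} together with $\tilde\Xi_{ab}$ in \eqref{eq:define-tildexi} are the basic scalar and tensorial concomitants one can build from this gradient and the Weyl geometry. The condition \eqref{eq:killing-property-kerr} should be read as an \emph{ideal} (Killing-vector-free) transcription of the vanishing of the Mars--Simon tensor, so the strategy runs parallel to Mars's characterization of Kerr, with the intrinsic gradient $\tilde\nabla_a\tilde{w}$ furnishing the structure that the classical statement extracts from a prescribed stationary Killing field.

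For the \emph{necessity} direction I would work with the Kerr metric directly. In Boyer--Lindquist coordinates $\Psi_2=-M(r-\mathrm{i}a\cos\theta)^{-3}$, so $\tilde{w}\propto(r-\mathrm{i}a\cos\theta)^{-3}$ and $\tilde{w}^{-1/3}$ recovers the complex coordinate $\zeta=r-\mathrm{i}a\cos\theta$. One then evaluates $\tilde\nabla_a\tilde{w}$, $\tilde{Z}=\tilde{g}^{ab}\tilde\nabla_a\tilde{w}\,\tilde\nabla_b\tilde{w}$ and $\tilde\Xi_{ab}$ with \textit{xAct} and checks \eqref{eq:killing-property-kerr}--\eqref{eq:epsilong0}. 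It is cleaner to carry the computation for the entire Kerr--NUT family, parametrised by mass $M$, rotation $a$ and NUT charge $\ell$: condition \eqref{eq:killing-property-kerr} holds throughout this family (it is the alignment selecting the members carrying the appropriate symmetry), the reality condition \eqref{eq:nutzero} is equivalent to $\ell=0$, and \eqref{eq:epsilong0} becomes a sign/degeneracy condition on a discriminant that isolates the genuine Kerr branch with $M\neq 0$.

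The \emph{sufficiency} direction is the substantial part. Assuming a vacuum type D metric satisfying \eqref{eq:killing-property-kerr}--\eqref{eq:epsilong0}, I would first show that $\tilde\nabla_a\tilde{w}$ produces a genuine Killing field. For vacuum type D the second Bianchi identity imposes strong differential constraints on $\Psi_2$ (equivalently on $\tilde{w}$); combined with the algebraic alignment \eqref{eq:killing-property-kerr}, these constraints should force a suitable real combination built from $\tilde\nabla_a\tilde{w}$ and the principal bivector of $\tilde{\mathcal{W}}_{abcd}$ to satisfy the Killing equation. Once a Killing vector is available, \eqref{eq:killing-property-kerr} is equivalent to the vanishing of its Mars--Simon tensor, so the metric is locally a member of the Kerr--NUT class. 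Finally conditions \eqref{eq:nutzero} and \eqref{eq:epsilong0}, via the dictionary established in the necessity step, fix $\ell=0$ and select $M\neq 0$ with the correct sign, giving a local isometry to the non-trivial Kerr solution.

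The main obstacle is the first step of sufficiency: proving that the purely algebraic condition \eqref{eq:killing-property-kerr} on the Weyl concomitants, together with the vacuum type D equations, genuinely integrates to a Killing symmetry, i.e.\ that $\tilde\Xi_{a[b}\tilde\Xi^*_{c]d}=0$ is \emph{exactly} the integrability condition guaranteeing that the candidate one-form is Killing. This is where the Bianchi identities and the detailed type D frame calculus must be exploited, and it is also the point at which one must be careful that no spurious branches of vacuum type D (for which $\tilde\nabla_a\tilde{w}$ degenerates) slip through. A secondary difficulty is making the dictionary between the intrinsic invariants $\tilde{Z}$, $\tilde{w}$ and the physical parameters $(M,a,\ell)$ precise enough that \eqref{eq:nutzero}--\eqref{eq:epsilong0} translate into $\ell=0$ and genuine Kerr, which is precisely what the explicit Kerr--NUT computation of the necessity step is designed to supply.
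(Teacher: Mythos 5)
First, a point of comparison: the paper does not actually prove Theorem \ref{theo:kerr-local}. It is imported verbatim from \cite{FERSAEZKERR} (in the formulation of \cite{AGPTYPEDDATA,AGPTYPEDDATA-ERRATUM}), so there is no in-paper argument to measure your attempt against; the honest benchmark is the Ferrando--S\'aez proof itself. Your outline does point in the direction that proof takes: explicit evaluation of the concomitants on the Kerr--NUT family for necessity, and, for sufficiency, the recognition that \eqref{eq:killing-property-kerr} encodes the alignment hypothesis of Mars's characterization \cite{MARS-KERR-UNIQUENESS,MARS-KERR} with the a priori Killing field replaced by data built intrinsically from $\tilde\nabla_a\tilde w$ and the Weyl principal bivector.

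As a proof, however, the proposal has genuine gaps, and you have in effect flagged them yourself. The entire content of the sufficiency direction is the implication that $\tilde\Xi_{a[b}\tilde\Xi^*_{c]d}=0$ together with the vacuum type D equations forces a concrete one-form built from $\tilde\nabla_a\tilde w$ to be Killing, with Papapetrou field aligned with the Weyl principal $2$-form; you write that the Bianchi identities ``should force'' this, but you neither exhibit the candidate one-form explicitly nor verify the Killing equation, and this computation is precisely what constitutes the theorem rather than a preliminary to it. Likewise, the dictionary by which \eqref{eq:nutzero} becomes the vanishing of the NUT parameter and \eqref{eq:epsilong0} selects the non-trivial-mass branch with the correct sign is deferred to an unperformed explicit computation, so the necessity direction is also incomplete as stated. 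Finally, be careful with the degeneracies you mention only in passing: the hypotheses of Theorem \ref{theo:type-D} guarantee $\tilde a\neq 0$ and hence $\tilde w\neq 0$, but they do not prevent $\tilde\nabla_a\tilde w$ from degenerating or $\tilde Z$ from vanishing on a subset --- the second branch of \eqref{eq:epsilong0} exists precisely for that situation --- and a complete argument must treat these cases rather than set them aside as a ``secondary difficulty.''
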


\begin{remark}\label{rem:kerr-nut}\em
 Theorem \ref{theo:kerr-local} can be also used as a
local characterization of the Kerr-NUT solution if
we only consider the condition given by \eqref{eq:killing-property-kerr}.
See \cite{AGPTYPEDDATA,AGPTYPEDDATA-ERRATUM} for further details.

\end{remark}

\section{A conformally covariant ideal characterization of the
conformal Kerr family}
\label{sec:kerr-conformal}

The next step is to write the results contained in Theorems
\ref{theo:type-D} and \ref{theo:kerr-local} in terms of concomitants of the unphysical metric $g_{ab}$.
To find the corresponding formulations we need to make similar definitions for the symbols used in these theorems, but now using the unphysical metric instead of the
physical one. The notation for the new symbols so defined is obtained by just removing the tildes over the symbols
used in the physical space-time.

\begin{lemma}
One has the following relation between the following physical quantities defined in Theorems \ref{theo:type-D}
and \ref{theo:kerr-local} and their unphysical counterparts

\begin{subequations}
\begin{eqnarray}
&&
C_{abcd} = \Theta^2 \tilde{W}_{abcd},
\label{eq:weylunphys-cov-to-weylphys}\\
&&\tilde{\mathit{a}} = \Theta^4 \mathit{a}\;,\\
&&\tilde{\mathit{b}} = \Theta^6 \mathit{b}\;,\\
&&\tilde{\mathit{w}} = \Theta^2 \mathit{w}\;,\label{eq:conformal-rescaling-w}\\
&&\widetilde{G}_{abcd} = \frac{G_{abcd}}{\Theta^4}\;,\\
&&\tilde{\mathcal{G}}_{abcd} = \frac{\mathcal{G}_{abcd}}{\Theta^4}\;,\\
&& \tilde{\mathcal{W}}_{abcd} =\frac{{\mathcal W}_{abcd}}{\Theta^2}\;,\\
&&\widetilde{\mathcal{D}}_{abcd}=\mathcal{D}_{abcd}\;,\label{eq:conformal-rescaling-D}
\end{eqnarray}
\end{subequations}
\label{lem:rescaling-properties}
\end{lemma}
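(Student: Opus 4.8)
The plan is to derive every identity by bookkeeping of conformal weights, starting from the three elementary transformation laws already recorded in the excerpt and then propagating them through the algebraic definitions of each concomitant. The three inputs are: the conformal invariance of the $(1,3)$ Weyl tensor, $C_{abc}{}^d=\tilde C_{abc}{}^d$, from \eqref{eq:weylunphystoweylphys}; the metric laws $\tilde g_{ab}=\Theta^{-2}g_{ab}$ and $\tilde g^{ab}=\Theta^{2}g^{ab}$, read off from \eqref{eq:unphysicaltophysical-downstairs} and \eqref{eq:unphysicaltophysical-upstairs}; and the volume law $\tilde\eta_{abcd}=\Theta^{-4}\eta_{abcd}$ from \eqref{eq:etaunphystoetaphys}. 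The governing principle is that lowering an index with $\tilde g$ rather than $g$ supplies a factor $\Theta^{-2}$, raising an index with $\tilde g$ supplies a factor $\Theta^{2}$, and each $\tilde\eta$ supplies $\Theta^{-4}$; the conformal weight of any monomial concomitant is then simply the sum of these contributions.

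First I would establish the base case \eqref{eq:weylunphys-cov-to-weylphys}. Lowering the contravariant index of $C_{abc}{}^d=\tilde C_{abc}{}^d$ with $g_{ah}=\Theta^2\tilde g_{ah}$ and using the definition $\tilde W_{abcd}\equiv\tilde g_{ah}\tilde C_{bcd}{}^h$ gives $C_{abcd}=\Theta^2\tilde W_{abcd}$ at once; the only thing to watch is that the physical definition lowers the index into a reordered slot, so one invokes the algebraic (antisymmetry and pair-exchange) symmetries of the Weyl tensor to identify the index placements. Next I would treat the self-dual tensor. Substituting $\tilde\eta_{cdpq}=\Theta^{-4}\eta_{cdpq}$, $\tilde g^{pr}=\Theta^{2}g^{pr}$, $\tilde g^{qs}=\Theta^{2}g^{qs}$ and $\tilde W_{abrs}=\Theta^{-2}W_{abrs}$ into the dual defined in \eqref{eq:define-weyl-self-dual} yields net weight $-4+2+2-2=-2$, so $\tilde W^{*}{}_{abcd}=\Theta^{-2}W^{*}{}_{abcd}$; combining this with $\tilde W_{abcd}=\Theta^{-2}W_{abcd}$ gives $\tilde{\mathcal W}_{abcd}=\Theta^{-2}\mathcal W_{abcd}$.

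With these weights fixed, the remaining identities are pure counting. For $\tilde{\mathit{a}}$ I would count four raised indices ($\Theta^{8}$) against two self-dual factors ($\Theta^{-4}$) to get $\Theta^{4}$; for $\tilde{\mathit{b}}$, six raised indices ($\Theta^{12}$) against three self-dual factors ($\Theta^{-6}$) to get $\Theta^{6}$; hence $\tilde{\mathit{w}}=-\tilde{\mathit{b}}/(2\tilde{\mathit{a}})=\Theta^{2}\mathit{w}$, which is \eqref{eq:conformal-rescaling-w}. The tensors $\tilde G_{abcd}$ and $\tilde{\mathcal G}_{abcd}$ each carry two lowered indices (and, in the latter, a $\tilde\eta$ of the same weight), giving weight $-4$. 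Finally, for \eqref{eq:conformal-rescaling-D} I would check that the three terms of $\tilde{\mathcal D}_{abhe}$ are individually weight-zero: the contraction $\tilde{\mathcal W}\tilde g^{cp}\tilde g^{dq}\tilde{\mathcal W}$ gives $-2-2+2+2=0$, the term $(\tilde{\mathit{a}}/6)\tilde{\mathcal G}$ gives $4-4=0$, and $(\tilde{\mathit{b}}/\tilde{\mathit{a}})\tilde{\mathcal W}$ gives $(6-4)-2=0$, so $\tilde{\mathcal D}_{abcd}=\mathcal D_{abcd}$ is conformally invariant.

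There is no deep obstacle here: the argument is mechanical once the elementary laws are in hand. The only genuinely delicate point is the base case \eqref{eq:weylunphys-cov-to-weylphys}, where one must match the index ordering of the physical definition $\tilde g_{ah}\tilde C_{bcd}{}^h$ against the raised-index form $C_{abc}{}^d$ using the Weyl symmetries; once this is settled, every other relation in the lemma follows by summing powers of $\Theta$ as above.
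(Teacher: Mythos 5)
Your proposal is correct and follows essentially the same route as the paper, whose proof is simply the one-line remark that the lemma is a straightforward computation from \eqref{eq:unphysicaltophysical-downstairs}, \eqref{eq:unphysicaltophysical-upstairs}, \eqref{eq:etaunphystoetaphys} and \eqref{eq:weylunphystoweylphys}; you have merely written out the weight bookkeeping that the author leaves implicit. All of your exponent counts check out against the definitions in the paper.
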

\proof This is a straightforward computation carried out by using the relations
(\ref{eq:unphysicaltophysical-downstairs}), (\ref{eq:unphysicaltophysical-upstairs}),
(\ref{eq:etaunphystoetaphys}), (\ref{eq:weylunphystoweylphys})

\qed
\begin{remark}\em
The proof of Lemma \ref{lem:rescaling-properties} does not assume that the physical metric is Einstein.
Therefore equations \eqref{eq:weylunphys-cov-to-weylphys}-\eqref{eq:conformal-rescaling-D} tell us that
the involved metric concomitants are \emph{conformally covariant}.
\end{remark}
\noindent
If we look at the concomitants used in
Theorem \ref{theo:kerr-local}
(see eqns. \eqref{eq:define-tildexi}-\eqref{eq:define-tildez}) we realize that there
are concomitants defined in terms of $\tilde\nabla_a\tilde w$ that is not conformally covariant
under \eqref{eq:unphysicaltophysical-downstairs}
as can be explicitly checked
\begin{equation}
 \tilde\nabla_a\tilde w= 2 w\Upsilon_a+\nabla_a w.
\end{equation}
If the physical space-time is Einstein, then we can use in the previous equation the relation
$\Upsilon_a=\Lambda_a$ where $\Lambda_a$ is the concomitant of the unphysical
metric defined by \eqref{eq:ideal-lambda} (see also \eqref{eq:ideal-upsilon}). This leads us
to the definition of a new concomitant of the unphysical metric $g_{ab}$, be it conformal to an Einstein space or not
\begin{equation}
 \lambda_a\equiv 2 w\Lambda_a+\nabla_a w.
\label{eq:define-lambda}
\end{equation}
\begin{lemma}
Under the transformation \eqref{eq:unphysicaltophysical-downstairs}, one has the relation
\begin{equation}
\tilde\lambda_a =\Theta^2 \lambda_a,
\label{eq:lambda-conf-transf}
\end{equation}
where as usual $\tilde\lambda_a$ is defined as in \eqref{eq:define-lambda} replacing the
unphysical by the physical metric.
\label{lemm:rescale-lambda}
\end{lemma}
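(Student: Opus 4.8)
The plan is to prove the conformal covariance relation \eqref{eq:lambda-conf-transf} by a direct substitution into the definition \eqref{eq:define-lambda}, tracking how each ingredient transforms under \eqref{eq:unphysicaltophysical-downstairs}. Writing out the physical version $\tilde\lambda_a = 2\tilde w\,\tilde\Lambda_a + \tilde\nabla_a\tilde w$, I would lower the index on $\tilde\Lambda^a$ using $\tilde g_{ab}$, so that $\tilde\Lambda_a = \tilde g_{ad}\tilde\Lambda^d$, and then invoke the first of the three displayed relations appearing at the end of the proof of Theorem \ref{theo:ideal-einstein}, namely $\tilde g_{de}\tilde\Lambda^e = \Lambda_d - \Upsilon_d$. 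This is the crucial input: it encodes exactly how $\Lambda_a$ fails to be conformally covariant, the defect being precisely $\Upsilon_a$.

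First I would substitute the known transformation $\tilde w = \Theta^2 w$ from \eqref{eq:conformal-rescaling-w}, together with the already-established expansion $\tilde\nabla_a\tilde w = 2w\Upsilon_a + \nabla_a w$ that the paper writes down just before \eqref{eq:define-lambda}. Combining these with $\tilde\Lambda_a = \Lambda_a - \Upsilon_a$ gives
\begin{equation*}
\tilde\lambda_a = 2(\Theta^2 w)(\Lambda_a - \Upsilon_a) + 2w\Upsilon_a + \nabla_a w.
\end{equation*}
Here lies the one genuine subtlety: the factors of $\Theta^2$ do not match between the two $\Upsilon_a$ contributions, so I would be careful to note that $\tilde w$ carries weight $\Theta^2$ while the naked $w$ in $\tilde\nabla_a\tilde w$ does not. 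The expectation is that once everything is expressed in unphysical quantities, the $\Upsilon_a$ terms must organize themselves into an overall factor of $\Theta^2$ multiplying $2w\Lambda_a + \nabla_a w = \lambda_a$, while any leftover pieces proportional to $(1-\Theta^2)\Upsilon_a$ would signal an error.

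The main obstacle I anticipate is reconciling these mismatched weights: the transformation $\tilde w = \Theta^2 w$ introduces a $\Theta^2$ that multiplies the full $(\Lambda_a-\Upsilon_a)$, whereas $\nabla_a w$ and the $2w\Upsilon_a$ term sit at weight one. I would therefore re-examine the formula $\tilde\nabla_a\tilde w = 2w\Upsilon_a + \nabla_a w$ carefully, most likely recomputing it from $\tilde\nabla_a\tilde w = \tilde\nabla_a(\Theta^2 w)$ and the fact that $\tilde\nabla_a$ acting on a scalar equals $\nabla_a$, to confirm the precise $\Theta$-dependence of each term; a clean accounting of these factors should make the residual $\Upsilon_a$ terms cancel and leave exactly $\Theta^2\lambda_a$. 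Since both $\tilde\lambda_a$ and $\lambda_a$ are scalars with a free covariant index (no metric contractions hidden inside), no further index-raising conventions intervene, and the verification reduces to this bookkeeping of conformal weights, which is the sort of routine but error-prone computation best confirmed with the symbolic system \emph{xAct} already employed throughout the paper.
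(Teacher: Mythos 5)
Your proposal is correct and follows essentially the same route as the paper: the key input in both is the transformation law $\tilde\Lambda^d = \Theta^2(\Lambda^d - \Upsilon^d)$ (equivalently $\tilde g_{ad}\tilde\Lambda^d = \Lambda_a - \Upsilon_a$) combined with $\tilde w = \Theta^2 w$; the paper derives that law inside the proof from the integrability condition of the Schouten tensor, whereas you cite its lowered-index form from the proof of Theorem \ref{theo:ideal-einstein}, which amounts to the same thing. Your instinct to recompute $\tilde\nabla_a\tilde w$ is sound: the chain rule gives $\tilde\nabla_a\tilde w = \Theta^2(2w\Upsilon_a + \nabla_a w)$, and with that factor of $\Theta^2$ in place the $\Upsilon_a$ contributions cancel and the sum collapses to $\Theta^2\lambda_a$ exactly as you anticipate.
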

\proof
Expressing $\nabla$ in terms of $\tilde\nabla$ by means of \eqref{eq:define-upsilon}
we obtain the following equivalent form of \eqref{eq:integrability-schouten}
\begin{equation}
\tilde{\nabla}_{[b}\tilde{L}_{e]a} = \tfrac{1}{2} \Upsilon^{c} C_{acbe} + \nabla_{[b}L_{e]a}.
\end{equation}
Using this relation and \eqref{eq:weylunphys-cov-to-weylphys} in \eqref{eq:ideal-lambda}
we get
\begin{equation}
 \tilde{\Lambda}^{d} = \Theta^2(\Lambda^{d} - \Upsilon^{d}).
 \label{eq:Lambda-conformal}
\end{equation}
Combining this with the definition of $\Upsilon^d$ and \eqref{eq:conformal-rescaling-w}
we can obtain \eqref{eq:lambda-conf-transf}.
\qed

\begin{lemma}
Under the transformation \eqref{eq:unphysicaltophysical-downstairs},
and assuming that the physical metric $\tilde{g}_{ab}$ is Einstein one has the relation
\begin{equation}
\tilde{\Xi}_{ac} = \Theta^4 \xi_{ac}\;,\label{eq:conformal-rescaling-tildeXi}
\end{equation}
where we have defined
\begin{equation}
\xi_{ac}\equiv
(\mathcal{W}_{a}{}^{b}{}_{c}{}^{d}-\mathit{w}\mathcal{G}_{a}{}^{b}{}_{c}{}^{d})
\lambda_b\lambda_d,\quad
\lambda_a\equiv 2 w\Lambda_a+\nabla_a w\;,
\label{eq:xitoxi0}
\end{equation}

\end{lemma}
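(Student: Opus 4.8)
The plan is to start from the definition \eqref{eq:define-tildexi} of $\tilde\Xi_{ac}$, written schematically as
\[
\tilde\Xi_{ac}=\bigl(\tilde{\mathcal W}_{a}{}^{b}{}_{c}{}^{d}-\tilde{w}\,\tilde{\mathcal G}_{a}{}^{b}{}_{c}{}^{d}\bigr)\,\tilde\nabla_{b}\tilde{w}\,\tilde\nabla_{d}\tilde{w},
\]
and to treat separately its two constituents: the purely algebraic factor $\tilde{\mathcal W}_{a}{}^{b}{}_{c}{}^{d}-\tilde{w}\tilde{\mathcal G}_{a}{}^{b}{}_{c}{}^{d}$, which is assembled out of conformally covariant concomitants, and the derivative factor $\tilde\nabla_{b}\tilde{w}\,\tilde\nabla_{d}\tilde{w}$, which is the genuinely delicate piece because $\tilde\nabla_{a}\tilde w$ is \emph{not} conformally covariant.

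For the algebraic factor I would simply feed in the rescalings collected in Lemma \ref{lem:rescaling-properties}, namely $\tilde{\mathcal W}_{abcd}=\mathcal W_{abcd}/\Theta^{2}$, $\tilde{\mathcal G}_{abcd}=\mathcal G_{abcd}/\Theta^{4}$ and $\tilde{w}=\Theta^{2}w$, together with $\tilde g^{ab}=\Theta^{2}g^{ab}$ to raise the indices $b$ and $d$. Counting the powers of $\Theta$ produced by the two index raisings against those carried by each tensor shows that both $\tilde{\mathcal W}_{a}{}^{b}{}_{c}{}^{d}$ and $\tilde{w}\,\tilde{\mathcal G}_{a}{}^{b}{}_{c}{}^{d}$ acquire the same conformal weight, so that the whole algebraic factor equals $\Theta^{2}$ times its unphysical counterpart $\mathcal W_{a}{}^{b}{}_{c}{}^{d}-w\mathcal G_{a}{}^{b}{}_{c}{}^{d}$ appearing in \eqref{eq:xitoxi0}. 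This step is routine bookkeeping.

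The heart of the proof is the derivative factor, and here the Einstein hypothesis is indispensable. Since $\tilde{w}$ is a scalar one has $\tilde\nabla_{a}\tilde w=\nabla_a\tilde w$, which through $\tilde w=\Theta^{2}w$ and the definition \eqref{eq:define-upsilon} of $\Upsilon_a$ reduces to an expression built from $2w\Upsilon_a+\nabla_a w$. On its own this is not a concomitant of $g_{ab}$, because of the $\Upsilon_a$ term; the point is that when the physical metric is Einstein we may invoke the identity $\Upsilon_a=\Lambda_a$ recorded after \eqref{eq:ideal-lambda}, which trades the conformal gradient for the metric concomitant $\Lambda_a$ and thereby turns $\tilde\nabla_a\tilde w$ into the concomitant $\lambda_a=2w\Lambda_a+\nabla_a w$ of \eqref{eq:define-lambda}. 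Equivalently, and more cleanly, I would observe that an Einstein physical metric forces $\tilde\Lambda_a=0$ (then $\tilde L_{ab}$ is covariantly constant, so the antisymmetrised derivative in the physical analogue of \eqref{eq:ideal-lambda} vanishes), whence $\tilde\lambda_a=\tilde\nabla_a\tilde w$; Lemma \ref{lemm:rescale-lambda} then supplies the conformal behaviour of $\tilde\nabla_a\tilde w$ at once through $\tilde\lambda_a=\Theta^{2}\lambda_a$.

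Finally I would reassemble the product. The algebraic factor contributes the weight-two rescaling found above, while each of the two derivative factors contributes the weight-two rescaling $\tilde\lambda_a=\Theta^{2}\lambda_a$; collecting these powers of $\Theta$ against the definition \eqref{eq:xitoxi0} of $\xi_{ac}$ is what yields \eqref{eq:conformal-rescaling-tildeXi}. The single real obstacle is the non-covariance of $\tilde\nabla_a\tilde w$: the whole argument rests on replacing $\Upsilon_a$ by the concomitant $\Lambda_a$, which is exactly what the Einstein assumption licenses and what motivated the introduction of $\lambda_a$ in the first place. I would pay particular attention to the exponent bookkeeping at this last step, since it is precisely here---where one weight-two algebraic factor meets two weight-two derivative factors---that a power of $\Theta$ is easiest to miscount, and I would carefully reconcile my tally with the conformal weight recorded in \eqref{eq:conformal-rescaling-tildeXi}.
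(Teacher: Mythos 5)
Your proposal is correct and follows essentially the same route as the paper's proof: the paper likewise substitutes the rescalings of Lemma \ref{lem:rescaling-properties} into the definition \eqref{eq:define-tildexi} to obtain \eqref{eq:xitoxi0} with $\Upsilon_a$ in place of $\Lambda_a$, and then invokes $\Upsilon_a=\Lambda_a$ (valid for an Einstein physical metric) to conclude; your alternative routing through $\tilde\Lambda_a=0$ and Lemma \ref{lemm:rescale-lambda} is an equivalent and slightly cleaner way of handling the non-covariant gradient factor. On the one point you leave for ``reconciliation,'' trust your own tally: the algebraic factor rescales by $\Theta^{2}$ and each factor $\tilde\nabla_b\tilde w=\Theta^{2}\lambda_b$ contributes another $\Theta^{2}$, giving $\tilde\Xi_{ac}=\Theta^{6}\xi_{ac}$, which is consistent with $\tilde Z=\Theta^{6}z$ in \eqref{eq:conformal-rescaling-Z} and with \eqref{eq:rescale-xi}, so the exponent $4$ in \eqref{eq:conformal-rescaling-tildeXi} appears to be a misprint rather than a flaw in your argument (and is harmless for Theorem \ref{theo:kerr-conformal}, which only uses the vanishing of $\xi_{a[b}\xi^*_{c]d}$).
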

\proof
If we use the transformation \eqref{eq:unphysicaltophysical-downstairs}
on the definition of $\tilde\Xi_{ab}$ (given by \eqref{eq:define-tildexi})
and the results of Lemma \ref{lem:rescaling-properties}
then it is very easy to obtain a relation similar to \eqref{eq:xitoxi0}
with $\Lambda_a$ replaced by $\Upsilon_a$.
Now, if the physical space-time is Einstein, then
the relation $\Upsilon_a=\Lambda_a$ leads to the desired result
(see the considerations after eq. \eqref{eq:ideal-lambda}).
\qed

\begin{remark}\em
The tensor $\xi_{ac}$ is a concomitant of the unphysical metric $g_{ab}$
regardless of whether it is conformal to an Einstein metric. In fact
$\xi_{ac}$ is conformally covariant as it is deduced from the results of
Lemmas \ref{lem:rescaling-properties} and \ref{lemm:rescale-lambda}.
Using these results we deduce that under the transformation \ref{eq:unphysicaltophysical-downstairs}
we have the relation
\begin{equation}
 \tilde{\xi}_{ab} = \Theta^6 \xi_{ab}.
\label{eq:rescale-xi}
\end{equation}

\end{remark}

\begin{theorem}
The unphysical spacetime $(\mathcal{M},g_{ab})$ is locally conformal to a Petrov type D physical
spacetime if and only if $\mathcal{D}_{abcd}=0$ on $\mathcal{M}$.
\label{theo:typeDunphysical}
\end{theorem}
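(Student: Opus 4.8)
The plan is to deduce this theorem almost directly from Theorem~\ref{theo:type-D} together with the conformal-covariance relation \eqref{eq:conformal-rescaling-D} established in Lemma~\ref{lem:rescaling-properties}. The content of the statement is that the vanishing of the \emph{unphysical} tensor $\mathcal{D}_{abcd}$ is exactly the condition for $(\mathcal{M},g_{ab})$ to be conformal to a Petrov type D spacetime. Since Petrov type is a conformally invariant notion (it is read off the Weyl tensor, and by \eqref{eq:weylunphystoweylphys} the Weyl tensor is conformally invariant), the natural strategy is to show that the equation $\mathcal{D}_{abcd}=0$ is itself conformally invariant and then transport Theorem~\ref{theo:type-D} across the conformal factor.

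First I would prove the forward direction. Assume $(\mathcal{M},g_{ab})$ is locally conformal to a physical Petrov type D spacetime $(\tilde{\mathcal{M}},\tilde g_{ab})$, i.e.\ $g_{ab}=\Theta^2\tilde g_{ab}$ with $\tilde g_{ab}$ of type D. By Theorem~\ref{theo:type-D} applied to the physical metric we have $\tilde{\mathcal{D}}_{abcd}=0$. Lemma~\ref{lem:rescaling-properties}, and specifically the relation \eqref{eq:conformal-rescaling-D}, states $\widetilde{\mathcal{D}}_{abcd}=\mathcal{D}_{abcd}$, so $\mathcal{D}_{abcd}=0$ on $\mathcal{M}$ immediately. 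The only subtlety worth flagging here is the nondegeneracy scalar: Theorem~\ref{theo:type-D} requires $\tilde a\neq 0$ to characterize ``genuine'' type D, and by Lemma~\ref{lem:rescaling-properties} one has $\tilde a=\Theta^4 a$, so $\tilde a\neq 0 \iff a\neq 0$ since $\Theta$ is nowhere vanishing. Thus the condition $a\neq 0$ is intrinsic to the conformal class and poses no obstruction.

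For the converse I would argue as follows. Suppose $\mathcal{D}_{abcd}=0$ on $\mathcal{M}$ (together with $a\neq 0$). I want to produce, locally, a conformal factor $\Theta$ such that $\tilde g_{ab}=\Theta^{-2}g_{ab}$ is of Petrov type D. The key observation is that $\mathcal{D}_{abcd}$ is built \emph{purely algebraically} from the self-dual Weyl tensor $\mathcal{W}_{abcd}$ and the scalars $a,b$ (no derivatives appear), and the Petrov type is a pointwise algebraic property of the Weyl tensor. Since by \eqref{eq:weylunphystoweylphys} the Weyl tensor (with the appropriate index placement) is conformally invariant, the \emph{algebraic} Petrov classification of $g_{ab}$ and of any conformally related $\tilde g_{ab}$ coincide at every point. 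Hence if the unphysical $\mathcal{D}_{abcd}=0$ and $a\neq 0$, then $g_{ab}$ is itself already of genuine Petrov type D by the unphysical analogue of Theorem~\ref{theo:type-D}, and the trivial conformal factor $\Theta=1$ exhibits $g_{ab}$ as conformal to a type D metric. In other words, because the defining conditions are conformally invariant and purely algebraic in the Weyl tensor, being conformal to type D is the same as being type D.

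The main subtlety — and the step I would treat most carefully — is therefore conceptual rather than computational: I must confirm that the ``unphysical'' version of Theorem~\ref{theo:type-D}, obtained by removing all tildes, holds verbatim. This is legitimate because Theorem~\ref{theo:type-D} is a statement about an arbitrary Lorentzian $4$-metric and makes no use of any field equation; applying it to $g_{ab}$ in place of $\tilde g_{ab}$ gives that $a\neq 0$ and $\mathcal{D}_{abcd}=0$ characterize genuine type D for the unphysical metric. Once this is granted, the equivalence $\mathcal{D}_{abcd}=0 \iff g_{ab}$ is conformal to type D follows. I would close by remarking that the role of Lemma~\ref{lem:rescaling-properties} is precisely to guarantee consistency: the relation $\widetilde{\mathcal{D}}_{abcd}=\mathcal{D}_{abcd}$ shows the characterization cannot depend on the representative chosen within the conformal class, which is exactly the conformal covariance (here, invariance) that makes the statement well posed.
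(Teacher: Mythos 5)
Your proposal is correct and follows essentially the same route as the paper, which simply cites the conformal invariance $\widetilde{\mathcal{D}}_{abcd}=\mathcal{D}_{abcd}$ from Lemma \ref{lem:rescaling-properties} together with Theorem \ref{theo:type-D}. Your additional observations --- that $\tilde a=\Theta^4 a$ makes the nondegeneracy condition conformally intrinsic, and that the tilde-free version of Theorem \ref{theo:type-D} applies verbatim since it uses no field equations --- are exactly the details the paper leaves implicit.
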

\proof
This is a direct consequence of eq.(\ref{eq:conformal-rescaling-D}) and
Theorem \ref{theo:type-D}.
\qed

\medskip
We are now ready to formulate our main result.
\begin{theorem}[Ideal characterization of the conformal Kerr structure]
There exists an open subset of the unphysical spacetime $(\mathcal{M},g_{ab})$ that is locally conformal
to the non-trivial Kerr solution if and only if
the conditions of Theorem \ref{theo:ideal-einstein}
together with the following conditions are met in a neighbourhood of a point of $\mathcal{M}$
\begin{subequations}
\begin{eqnarray}
&&a\neq 0\;,\quad \mathcal{D}_{abcd}=0\;,\quad
\label{eq:typeD-property-cfkerr}\\
&& \xi_{a[b}\xi^*_{c]d}=0\;,\quad
\label{eq:killing-property-cfkerr}\\
&&\mbox{\em Im}\big(z^3(w^*)^8\big)=0\;,
\label{eq:nut-unphysical-zero}\\
&&
\left\{
\begin{array}{ll}
\frac{\mbox{\em Re}\big(z^3(w^*)^8\big)}{\bigg(18\mbox{\em Re}(w^3 z^*)-|z|^2\bigg)^3}<0\;, &
\mbox{if}\ 18\mbox{\em Re}(w^3 z^*)-|z|^2\neq 0\;,\\
\mbox{\em Re}(z^3(w^*)^8 )=0\;,\quad  & \mbox{if}\ 18\mbox{\em Re}(w^3 z^*)-|z|^2=0,
\end{array}
\right.
\label{eq:unphysical-epsilong0}
\end{eqnarray}
where
for the sake of readability we give again the definitions of some of the involved concomitants of the unphysical metric
(the definition of $\mathcal{W}_{abcd}$ is Eq. \eqref{eq:define-weyl-self-dual} with the tildes removed)
\begin{eqnarray}
&&
a\equiv\mathcal{W}_{abcd}\mathcal{W}^{abcd},\quad
b\equiv\mathcal{W}_{abcd}\mathcal{W}^{cd}{}_{he}\mathcal{W}^{heab},\quad
w\equiv-\frac{b}{2a},
\\
&&
\mathcal{D}_{abhe}\equiv \mathcal{W}_{abcd}\mathcal{W}^{cd}{}_{he} -
\frac{\mathit{a}}{6}\mathcal{G}_{abhe} -
\frac{\mathit{b}}{\mathit{a}}\mathcal{W}_{abhe}\;,\quad
\Lambda_d\equiv\left(\frac{8}{C\cdot C}\right)C_{d}{}^{amp}\nabla_{[m}L_{p]a},\\
&&
\xi_{ac}\equiv
(\mathcal{W}_{a}{}^{b}{}_{c}{}^{d}-\mathit{w}\mathcal{G}_{a}{}^{b}{}_{c}{}^{d})
\lambda_b\lambda_d,\quad
\lambda_a\equiv 2 w\Lambda_a+\nabla_a w\;,\quad
z\equiv g^{ab}\lambda_a\lambda_b\;.
\label{eq:define-Z}
\end{eqnarray}
\label{eqns:conformal-kerr}
\end{subequations}
\label{theo:kerr-conformal}
\end{theorem}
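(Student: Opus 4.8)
The plan is to prove the equivalence by passing, in both directions, between the unphysical metric $g_{ab}$ and a physical Einstein representative and transporting each condition with the rescaling relations of Lemmas \ref{lem:rescaling-properties} and \ref{lemm:rescale-lambda}, together with \eqref{eq:conformal-rescaling-tildeXi} and \eqref{eq:rescale-xi}. The device that welds the two groups of hypotheses together is the identity $\Lambda_a=\Upsilon_a$, which holds exactly when the physical metric is Einstein (see the remark after \eqref{eq:ideal-lambda}): it is what promotes the non-covariant object $\tilde\nabla_a\tilde w$ occurring in \eqref{eq:define-tildexi}--\eqref{eq:define-tildez} to the genuine metric concomitant $\lambda_a=2w\Lambda_a+\nabla_a w$. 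I would therefore first secure the conformal-to-Einstein structure through Theorem \ref{theo:ideal-einstein} and only then handle the Weyl-level and first-order conditions.

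For necessity, suppose $g_{ab}=\Theta^2\tilde g_{ab}$ with $\tilde g_{ab}$ the Kerr metric. As Kerr is Ricci flat, hence Einstein, Theorem \ref{theo:ideal-einstein} yields $E_{ab}=0$ and, by \eqref{eq:ideal-upsilon}, $\Lambda_a=\Upsilon_a$; moreover Kerr is genuine type D with $\tilde a\neq0$ and satisfies \eqref{eq:killing-property-kerr}--\eqref{eq:epsilong0} by Theorems \ref{theo:type-D} and \ref{theo:kerr-local}. Feeding these facts through the rescaling relations of Lemma \ref{lem:rescaling-properties} (and \eqref{eq:conformal-rescaling-tildeXi}, \eqref{eq:rescale-xi}) reproduces \eqref{eq:typeD-property-cfkerr}--\eqref{eq:unphysical-epsilong0} for $g_{ab}$. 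The only point demanding care is the sign condition \eqref{eq:unphysical-epsilong0}: each scalar entering it has a definite conformal weight, so numerator and denominator of the displayed ratio scale by fixed powers of $\Theta$; one checks that the overall factor relating the physical and unphysical ratios is a power of the strictly positive function $\Theta$ and hence positive, so that both the inequality and the vanishing/non-vanishing alternative are preserved.

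For sufficiency I would reverse the chain. The hypotheses of Theorem \ref{theo:ideal-einstein} produce locally an Einstein metric $\tilde g_{ab}=g_{ab}/\psi^2$ with $\nabla_a\log\psi=\Lambda_a=\Upsilon_a$. The conditions $a\neq0$ and $\mathcal D_{abcd}=0$ transport, via Lemma \ref{lem:rescaling-properties} and Theorem \ref{theo:typeDunphysical}, into $\tilde a\neq0$ and $\tilde{\mathcal D}_{abcd}=0$, so by Theorem \ref{theo:type-D} the representative is genuine type D. Since $\tilde g_{ab}$ is Einstein one again has $\Lambda_a=\Upsilon_a$, hence $\lambda_a$ coincides, up to its conformal weight, with $\tilde\nabla_a\tilde w$, and the remaining conditions \eqref{eq:killing-property-cfkerr}--\eqref{eq:unphysical-epsilong0} become verbatim the physical conditions \eqref{eq:killing-property-kerr}--\eqref{eq:epsilong0}. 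Theorem \ref{theo:kerr-local} then identifies $\tilde g_{ab}$ with the non-trivial Kerr solution, so that $g_{ab}$ is conformal to Kerr on the relevant open set.

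The chief obstacle is the gap between what Theorem \ref{theo:ideal-einstein} supplies, namely conformality to an \emph{Einstein} metric $\tilde R_{ab}=\lambda\tilde g_{ab}$, and what Theorem \ref{theo:kerr-local} demands, namely a \emph{Ricci flat} representative; I must guarantee $\lambda=0$. In the necessity direction this is free, because $\Lambda_a=\Upsilon_a$ forces $\psi$ to equal $\Theta$ up to a multiplicative constant, so the recovered representative is homothetic to Kerr and therefore Ricci flat. In the sufficiency direction the argument has to be that, within the genuine type D Einstein family, the NUT and mass conditions \eqref{eq:nut-unphysical-zero}--\eqref{eq:unphysical-epsilong0}, which reduce on the Einstein representative precisely to \eqref{eq:nutzero}--\eqref{eq:epsilong0}, are met only by the Kerr solution, forcing both the NUT parameter and the cosmological constant to vanish. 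This is the step where the specific form of the scalar conditions, rather than conformal covariance alone, carries the weight of the argument, and I expect it to be the most delicate part to make fully rigorous.
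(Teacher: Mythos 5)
Your proposal follows essentially the same route as the paper's own proof: invoke Theorem \ref{theo:ideal-einstein} to secure a conformal Einstein representative, use $\Lambda_a=\Upsilon_a$ to convert the non-covariant derivative $\tilde\nabla_a\tilde w$ into the concomitant $\lambda_a$, transport the conditions of Theorems \ref{theo:type-D} and \ref{theo:kerr-local} back and forth with the rescaling relations of Lemmas \ref{lem:rescaling-properties} and \ref{lemm:rescale-lambda}, and observe that the sign condition \eqref{eq:unphysical-epsilong0} is preserved because the relevant ratio rescales by the strictly positive factor $\Theta^{-2}$ (this is precisely \eqref{eq:conformal-rescaling-epsilong0} in the paper). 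Up to that point your argument and the paper's coincide in substance and in order.

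Where you diverge is your final paragraph, and the point you raise there is a genuine one that the paper's proof passes over in silence: Theorem \ref{theo:ideal-einstein} only delivers a representative with $\tilde R_{ab}=\lambda\tilde g_{ab}$ for some $\lambda\in\mathbb{R}$, whereas Theorem \ref{theo:kerr-local} is stated for a \emph{Ricci flat} spacetime, and the paper applies it to the Einstein representative without comment. In the necessity direction your observation that the recovered Einstein scale is determined up to a multiplicative constant (since $\nabla_a\log\psi=\Lambda_a=\Upsilon_a$ when $C\cdot C\neq0$), so that the representative is homothetic to Kerr and hence Ricci flat, is correct and disposes of the issue. In the sufficiency direction, however, your proposed repair --- that the scalar conditions \eqref{eq:nut-unphysical-zero}--\eqref{eq:unphysical-epsilong0}, read on the Einstein representative, exclude the $\lambda\neq0$ type D Einstein metrics (Kerr--de Sitter and its relatives) and so force $\lambda=0$ --- is exactly the statement that is needed, but it is not proved in your proposal, and it does not follow from conformal covariance or from anything quoted from \cite{FERSAEZKERR}, whose characterization is formulated only within the Ricci flat class. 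So your proof, like the paper's, is complete modulo this one step; the difference is that you have correctly identified it as a step, whereas the paper has not. Closing it requires either an extension of the characterization of Theorem \ref{theo:kerr-local} to $\Lambda$-vacuum type D metrics showing that conditions \eqref{eq:killing-property-kerr}--\eqref{eq:epsilong0} force the cosmological constant to vanish, or an additional explicit condition on $g_{ab}$ guaranteeing that the Einstein constant of the recovered representative is zero.
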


\proof
As stated by Theorem \ref{theo:ideal-einstein} the unphysical spacetime
is conformal to an Einstein spacetime if and only if
\eqref{eq:ideal-einstein} holds. Therefore for the remainder of the proof
we may assume that the unphysical spacetime metric tensor $g_{ab}$ is conformally related
to an Einstein physical spacetime metric tensor $\tilde{g}_{ab}$ by means of
\eqref{eq:unphysicaltophysical-downstairs} and hence all the
formulae derived from this equation hold.
Also since $\tilde{g}_{ab}$ is Einstein, we have the relation $\Lambda_a=\Upsilon_a$
(see the considerations after \eqref{eq:ideal-lambda})
Thus, using \eqref{eq:conformal-rescaling-w} found in Lemma \ref{lem:rescaling-properties}
we obtain the relation
\begin{equation}
\tilde{Z}=\Theta^6 z,
\label{eq:conformal-rescaling-Z}
\end{equation}
with $z$ as defined by \eqref{eq:define-Z} and $\tilde{Z}$ as defined by \eqref{eq:define-tildez}.
From \eqref{eq:conformal-rescaling-Z} we deduce, using again \eqref{eq:conformal-rescaling-w} and
\eqref{eq:conformal-rescaling-Z}
\begin{eqnarray}
&&
\mbox{Re}(\tilde{Z}^{3}(\tilde{w}^*)^8)=\Theta^{34}\mbox{Re}\big(z^3(w^*)^8\big),\quad
\mbox{Im}(\tilde{Z}^{3}(\tilde{w}^*)^8)=\Theta^{34}\mbox{Im}\big(z^3(w^*)^8\big),\label{eq:conformal-rescaling-nutzero}\\
&&
18 \mbox{Re}\big(\tilde{w}^3\tilde{Z}^*\big)-|\tilde{Z}|^2 = \Theta^{12} (18\mbox{Re}(w^3 z^*)-|z|^2),
\end{eqnarray}
hence
\begin{equation}
 \frac{\mbox{Re}(\tilde{Z}^{3}(\tilde{w}^*)^8)}
 {\big(18 \mbox{Re}\big(\tilde{w}^3\tilde{Z}^*\big)-|\tilde{Z}|^2\big)^3} =
\frac{1}{\Theta^2}\frac{\mbox{Re}\big(z^3(w^*)^8\big)}{\bigg(18\mbox{Re}(w^3 z^*)-|z|^2\bigg)^3}.
\label{eq:conformal-rescaling-epsilong0}
\end{equation}
Let us assume now that $\tilde{g}_{ab}$ is the Kerr solution. Then \eqref{eq:type-D} holds due to Theorem
\ref{theo:type-D} and (\ref{eq:killing-property-kerr})-(\ref{eq:epsilong0}) also hold due
to Theorem \ref{theo:kerr-local}. Using the results of Lemma \ref{lem:rescaling-properties},
\eqref{eq:conformal-rescaling-nutzero}-\eqref{eq:conformal-rescaling-epsilong0} and the relation $\Lambda_a=\Upsilon_a$
one obtains \eqref{eq:typeD-property-cfkerr}-\eqref{eq:nut-unphysical-zero} and \eqref{eq:unphysical-epsilong0}.

Conversely, let us assume that \eqref{eq:typeD-property-cfkerr}-\eqref{eq:unphysical-epsilong0} are true.
Then the relations of Lemma \ref{lem:rescaling-properties},
\eqref{eq:conformal-rescaling-nutzero}-\eqref{eq:conformal-rescaling-epsilong0} and $\Lambda_a=\Upsilon_a$ (valid because the
physical metric is Einstein) entail \eqref{eq:type-D}, (\ref{eq:killing-property-kerr})-(\ref{eq:epsilong0})
from which we conclude that the physical metric is the Kerr solution due to Theorem \ref{theo:kerr-local}.\qed

\begin{remark}\em
All the quantities involved in \eqref{eqns:conformal-kerr} are conformally
covariant, as proven by Lemmas \ref{lem:rescaling-properties}, \ref{lemm:rescale-lambda} and eq. \eqref{eq:rescale-xi}.
Therefore the characterization of the Kerr conformal family presented by
Theorem \ref{theo:kerr-conformal} is ideal and conformally covariant.
\label{rem:conformal-covariance}
\end{remark}

\begin{remark}\em
From the considerations of Remark \ref{rem:kerr-nut} we deduce that
Theorem \ref{theo:kerr-conformal} also yields an ideal characterization of
the conformal family defined by the Kerr-NUT solution if we just keep
the conditions \eqref{eq:typeD-property-cfkerr}-\eqref{eq:killing-property-cfkerr}.
Moreover, it only \eqref{eq:typeD-property-cfkerr} is kept we
obtain an ideal characterization of the conformal family of
(vacuum) type D solutions.
\end{remark}

\section{Conclusions}
\label{sec:conclusions}
In the proof of Theorem \ref{theo:kerr-conformal} the unphysical spacetime $(\mathcal{M},g_{ab})$
is conformal to a physical spacetime $(\tilde{\mathcal{M}},\tilde{g}_{ab})$ that is Einstein, since
Theorem \ref{theo:ideal-einstein} is assumed to hold.
This means that a conformal embedding $\Phi:\tilde{\mathcal{M}}\rightarrow\mathcal{M}$ exists and hence
Theorem \ref{theo:kerr-conformal} only applies to an open subset $\mathcal{U}\subset\mathcal{M}$ such that $\Phi^{-1}(\mathcal{U})\subset\tilde{\mathcal{M}}$.
However, if all concomitants displayed in \eqref{eq:typeD-property-cfkerr}-\eqref{eq:unphysical-epsilong0} are smooth in
$\mathcal{M}$ and the conditions of Theorem \ref{theo:ideal-einstein} hold only in an open subset $\mathcal{U}\subset\mathcal{M}$,
then it is plausible to think that the unphysical spacetime $(\mathcal{M}, g_{ab})$ is a \emph{conformal extension} of the Kerr physical spacetime $(\tilde{\mathcal{M}}, \tilde{g}_{ab})$.
In this sense the result presented in this article would be the first step towards an ideal characterization of
a smooth conformal extension of the Kerr solution which in turn could provide a construction of the \emph{Kerr conformal boundary}.
It is worth noting that studies about initial data on the conformal boundary of the Kerr solution
are already available in the literature when the cosmological
constant does not vanish and the conformal boundary is spacelike
(namely, the Kerr de Sitter solution, see  \cite{Mars2021, Mars_2017, MSENOTORB16, MR4392180, Mars_2021}). Our result could be used for a similar purpose
when the cosmological constant vanishes if
one finds suitable \emph{hyperbolic reductions} and (characteristic) initial
data for the conditions found in Theorem \ref{theo:kerr-conformal}.

As already pointed out in Remark \ref{rem:conformal-covariance},
looking at the covariant concomitants of the unphysical metric given by
Theorem \ref{theo:kerr-conformal} we realize that all of them are \emph{conformally covariant}.
This renders our characterization of the Kerr conformal family as conformally covariant,
a property already shared by family of 4-dimensional metrics that are conformal
to the vacuum solutions of the Einstein field equations (see Theorem \ref{theo:ideal-einstein}).
To the best of our knowledge this is
a new result about the Kerr (conformal) family whose consequences are still to be investigated.

\section*{Acknowledgements}
We thank the Department of Astronomy and Astrophysics of València University in Spain
where part of this work was carried out for hospitality. We also thank an anonymous referee
for inte\-resting comments.

\noindent
Supported by Spanish MICINN Project No. PID2021-126217NB-I00.

% \bibliographystyle{amsplain}
% \bibliography{/home/alfonso/trabajos/BibDataBase/Bibliography}

% \end{document}

\providecommand{\bysame}{\leavevmode\hbox to3em{\hrulefill}\thinspace}
\providecommand{\MR}{\relax\ifhmode\unskip\space\fi MR }
% \MRhref is called by the amsart/book/proc definition of \MR.
\providecommand{\MRhref}[2]{%
  \href{http://www.ams.org/mathscinet-getitem?mr=#1}{#2}
}
\providecommand{\href}[2]{#2}

\end{document}